\documentclass[letterpaper, 10 pt, conference]{ieeeconf} 

\IEEEoverridecommandlockouts                             

\overrideIEEEmargins                                      

\title{\LARGE \bf
Optimal Trajectory Control of Geometrically Exact Strings with Space-Time Finite Elements$^{*}$
}

\author{Tobias Thoma$^{1}$ and Paul Kotyczka$^{1}$
\thanks{$^{*}$Submitted to the 23rd European Control Conference (ECC).}
\thanks{$^{1}$Technical University of Munich, TUM School of Engineering
	and Design, Chair of Automatic Control, 85748 Germany (e-mail: \{tobias.thoma, kotyczka\}@tum.de)}%
}

\usepackage{graphicx}
\usepackage{amsmath}
\usepackage{xcolor}
\usepackage{amssymb}
\usepackage{color}
\usepackage{transparent}
\graphicspath{{./fig/}}

\def\ttM{\mathrm{M}}
\def\ttG{\mathrm{G}}
\def\ttC{\mathrm{C}}
\def\ttB{\mathrm{b}}
\def\ttk{\mathrm{k}}
\def\ttr{\mathrm{r}}
\def\ttu{\mathrm{u}}
\def\tty{\mathrm{y}}
\def\ttw{\mathrm{w}}
\def\ttv{\mathrm{v}}
\def\ttf{\mathrm{f}}
\def\ttx{\mathrm{x}}

\newtheorem{assum}{Assumption}
\newtheorem{rem}{Remark}
\newtheorem{prob}{Problem}
\newtheorem{lem}{Lemma}
\newtheorem{prop}{Proposition}

\begin{document}

\maketitle
\thispagestyle{empty}
\pagestyle{empty}

\begin{abstract}
In this contribution, we present a variational space-time formulation which generates an optimal feed-forward controller for geometrically exact strings. More concretely, the optimization problem is solved with an indirect approach, and the space-time finite element method translates the problem to a set of algebraic equations. Thereby, only the positional field and the corresponding adjoint variable field are approximated by continuous shape functions, which makes the discretization of a velocity field unnecessary. In addition, the variational formulation can be solved using commercial or open source finite element packages. The entire approach can also be interpreted as a multiple-shooting method for solving the optimality conditions based on the semi-discrete problem. The performance of our approach is demonstrated by a numerical test.
\end{abstract}

\section{Introduction}
\label{sec:1}
Increasing efficiency of mechatronic devices is often achieved through lightweight design using flexible mechanical structures like beams, strings or plates, and so on. This leads to a more complex modeling procedure often relying on the theory of nonlinear continuum mechanics. Due to the limited number of actuators the considered systems are highly under-actuated. This makes their analysis and control design challenging. A common strategy to handle such systems is a two degree of freedom control structure containing a feed-forward and a feed-back part, see \cite{Wang2021,Morlock2022} for recent examples. While a simple feed-back controller reacts to unexpected disturbances, an advanced feed-forward part is responsible for highly accurate tracking of a given reference trajectory. In this contribution, we will only focus on the feed-forward part for reference tracking.

Starting in 1995, Fliess et al. \cite{FLIESS1995} coined the concept of flatness, thus achieving a new milestone for the feed-forward control design of nonlinear systems. Flat systems offer the possibility of a purely algebraic inversion-based control design \cite{Rothfuss1997,Hagenmeyer2004}.
The concept of differential flatness has been extended for infinite-dimensional systems including several mechanical structures, e.g., \cite{Petit2001}.
Kn\"uppel and Woittennek \cite{Knueppel2010} designed a feed-forward controller for a nonlinear string based on the theory of flatness and the method of characteristics.
In the case of non-flat systems the feed-forward control design is more difficult and we have to distinguish between minimum and non-minimum phase systems \cite{Khalil2014}. 
Since these system classes contain an internal dynamics their inversion usually involves solving a two-point boundary value problem.
Chen and Paden \cite{DEGANG1996} presented a stable inversion of nonlinear non-minimum phase systems to generate a bounded solution for the system state and consequently the control input to realize the desired output trajectory with pre- and post-actuation phases.
Taylor and Li \cite{Taylor2002} introduced finite difference methods for the stable inversion of nonlinear systems.
In order to obtain a stable inversion without pre- and post-actuation phase, Graichen et al. \cite{Graichen2005} modified the boundary value problem.
Besides inversion-based feed-forward control one can also design an optimal trajectory tracking controller, see \cite{Rao2010} for a survey on numerical methods for optimal control.

Blajer and Ko\l odziejczyk \cite{Blajer2004} introduced the servo-constraint framework forming a set of differential algebraic equations for under-actuated mechanical systems. Their work had a huge influence on the multibody system dynamics community and motivated several research articles \cite{Druecker2023a}. In general, these articles focus either on flat, minimum or non-minimum phase systems.
E.g., Otto and Seifried \cite{Otto2017} applied the servo-constraint framework to a flat crane system for real-time trajectory control. 
Otto and Seifried \cite{Otto2018} demonstrated the challenges of servo-constraints for systems with a high differential index.
Str\"ohle and Betsch \cite{Stroehle2022} solved the inverse dynamics of geometrically exact strings using servo-constraints and a space-time finite element discretization.
Br\"uls et al. \cite{Bruels2013} extended the approach of Chen and Paden \cite{DEGANG1996} to the servo-constraint framework. In order to simplify the control design for non-minimum phase systems, Dr\"ucker and Seifried \cite{Druecker2023b} approximated the original boundary value problem.
Bastos et al. \cite{Bastos2013} formulated an optimization problem using servo-constraints for solving the inverse dynamics. Bastos et al. \cite{Bastos2017} also showed that their optimization problem converges to the original two-point boundary value problem of Br\"uls
et al. \cite{Bruels2013} as the pre- and post-actuation phase goes to infinity.
An analysis of alternative cost functionals is given in \cite{Altmann2016}.

To the best of our knowledge, flatness-based approaches have commonly been used for the control of strings.
These were mostly based on either infinite-dimensional or lumped mass systems, e.g., \cite{Murray1996,Knueppel2010,Stroehle2022}.
The spatial Galerkin approximation of a geometrically exact string can generate a non-minimum phase system with many degrees of freedom, for which the mentioned approaches of inversion-based feed-forward control design remain elaborate and non-trivial tasks.

The intention of our contribution is to show the design of a feed-forward controller, which
\begin{itemize}
	\item can be applied for large scale structural mechanical (non-minimum phase) systems,
	\item does not require the explicit analysis of a semi-discrete system and its internal dynamics,
	\item can be easily implemented with commercial or open source finite element packages.
\end{itemize}
We achieve this by formulating an optimization problem to track the desired reference trajectory on a sufficiently large time interval (containing pre- and post-actuation), expressing the necessary stationarity conditions in variational form in space and time, and coding them in a finite element software. Using space-time finite elements, the optimal feed-forward controller is obtained from the solution of a set of algebraic equations.
Due to the special form of our variational formulation an explicit discretization of the velocities is not required.

The paper is organized as follows. In Section \ref{sec:2}, we recall the theory of continuum mechanics for geometrically exact strings and a corresponding Galerkin approximation with finite elements.
Afterwards, we define the overall problem and focus on the feed-forward control design in Section \ref{sec:3}.
This includes the variational formulation and its discretization.
In Section \ref{sec:4}, we test our approach with a numerical example and Section \ref{sec:5} concludes the contribution with an outlook.

\section{Geometrically exact strings}
\label{sec:2}
In this section, we recall the basic equations of geometrically exact strings and their spatial finite element approximation as a starting point for the formulation of our approach.
The interested reader is referred to \cite{Antman1995,Bonet2008} for more information.
\subsection{Problem description}
In the following, we consider the one-dimensional reference configuration ${\Omega_0= \left[ 0,L\right]\subset\mathbb{R}^d}$, $d=1,2,3$. The position of each particle in the reference configuration is described by the arc length coordinate $s=X_1\in\mathbb{R}$. The momentum balance of a string is given by
\begin{equation} \label{eq:balance}
	\rho(s) A(s)\partial^2_tr(s,t)=\partial_sn(s,t)+b(s,t) \quad \text{in} \; \Omega_0,
\end{equation}
where the position vector $r(s,t)$ describes the motion or actual configuration of the string, see Figure \ref{fig:configuration}. As usual $\rho(s)$ represents the density, $A(s)$ the cross-sectional area, $b(s,t)$ body forces, and $n(s,t)$ the normal force.

\begin{assum}
	In the present work, we use the simple nonlinear hyperelastic constitutive equation
	\begin{equation} \label{eq:constitutive}
		n(s,t)=EA(1-||\partial_sr(s,t)||^{-1})\partial_sr(s,t),
	\end{equation}
	where $E$ represents the Young's modulus. More precise constitutive laws are given in the literature, see \cite{Stroehle2022}.
\end{assum}

In order to complete the system description, we apply the boundary conditions
\begin{align} \label{eq:boundary}
	\begin{array}{ll}
		n(0,t) &= -u(t), \\
		n(L,t) &= 0,
	\end{array}
\end{align}
where $u(t)$ denotes the control input or actuating force vector.
This force should transfer the string from an initial configuration
\begin{align} \label{eq:initial}
	\begin{array}{ll}
		r(s,t_i) &= r_i(s), \\
		\partial_tr(s,t_i) &= v_i(s),
	\end{array}
\end{align}
to a final one
\begin{align} \label{eq:final}
	\begin{array}{ll}
		r(s,t_e) &= r_e(s), \\
		\partial_tr(s,t_e) &= v_e(s),
	\end{array}
\end{align}
whereby the end of the string
\begin{equation} 
	y(t)=r(L,t)
\end{equation}
should follow a desired trajectory $y_d(t)$ as closely as possible.
Before we continue with this task in Section \ref{sec:3}, we show how to set up the finite-dimensional state space model.

\begin{figure}[htbp]
	\centering
	\def\svgwidth{7.5cm}
	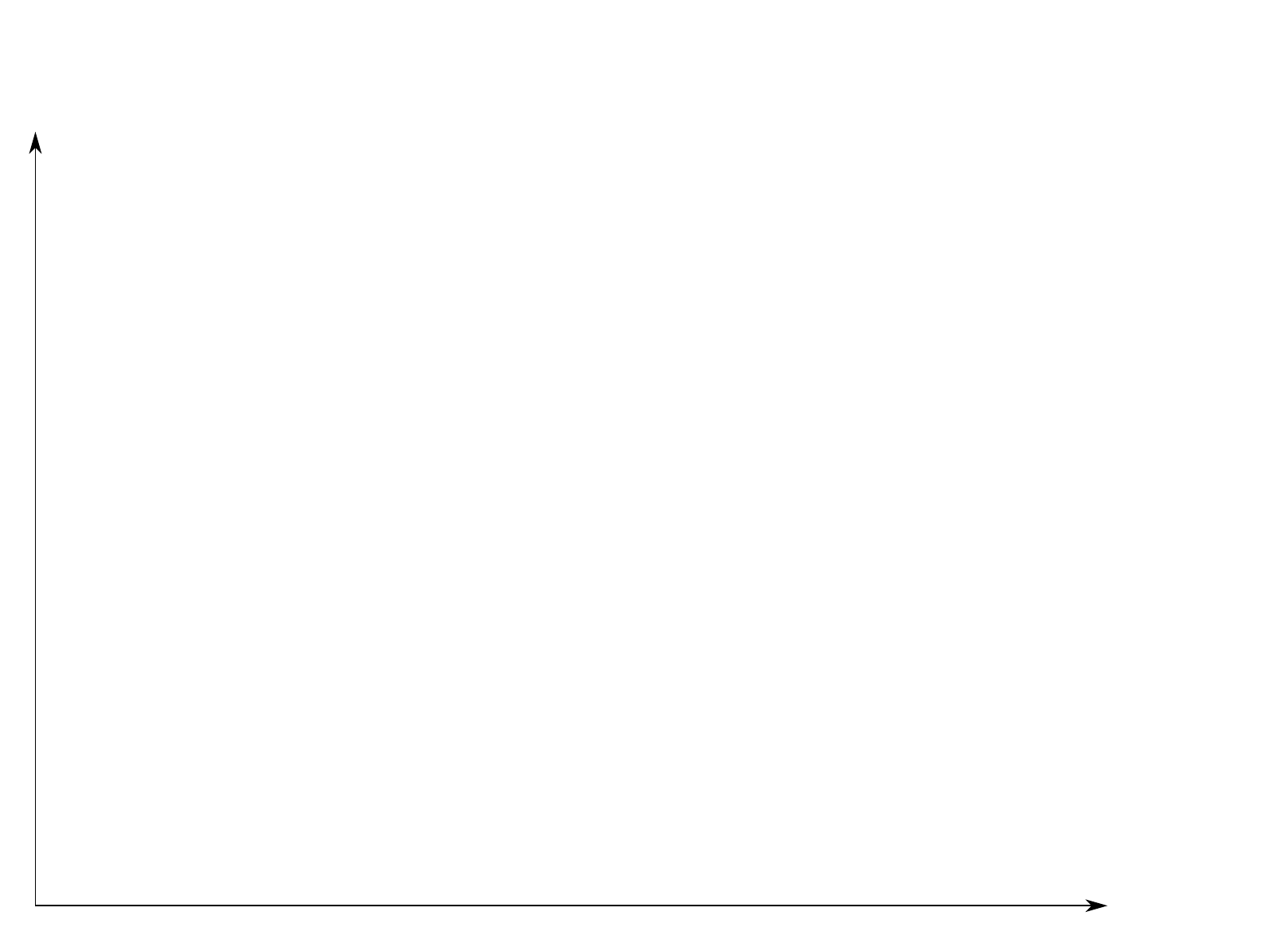
	\caption{Reference and actual configuration on $\mathbb{R}^2$.}
	\label{fig:configuration}
\end{figure}

\subsection{Weak formulation}
For ease of notation, we introduce 
\begin{align}
	\langle a,b \rangle_M=\int_{M} a \odot b \; \mathrm{d} s \notag
\end{align}
for the integration of products of scalar and vector valued functions over some domain $M$.
As a next step, we recall an appropriate weak form of equation \eqref{eq:balance}. 
By testing \eqref{eq:balance} and integrating by parts, one can see that any sufficiently regular solution of \eqref{eq:balance} under the boundary conditions \eqref{eq:boundary} is characterized by the variational single-field (displacement) formulation
\begin{align}
	\langle\rho A\partial^2_tr, w\rangle_{\Omega_0} &= -\langle  n, \partial_sw\rangle_{\Omega_0} + \langle b, w\rangle_{\Omega_0} + uw|_{s=0},
\end{align}
$\forall w\in H^1(\Omega_0)$, $t\geq0$.

\subsection{Spatial discretization}
In order to obtain a spatial approximation of above weak formulation, we have to find $r_h\in V_h\subset H^1(\Omega_0)$ solving the discretized weak form
\begin{align}
	\langle\rho A\partial^2_tr_h, w_h\rangle_{\Omega_0} &= -\langle  n, \partial_sw_h\rangle_{\Omega_0} + \langle b, w_h\rangle_{\Omega_0} + uw_h|_{s=0},
\end{align}
$\forall w_h\in V_h$, and $t\geq0$, where $V_h$ represents an appropriate finite dimensional subspace, e.g. a standard finite element space.
Accordingly, this discretization process leads to a system of ordinary differential equations of the form\footnote{The nonlinearity in ${\ttk(\ttr)}$ results from the nonlinear character of equation \eqref{eq:constitutive}.}
\begin{align}
	\label{eq:ode}
	\ttM\ddot{\ttr} &= -\ttk(\ttr) + \ttB + \ttG\ttu,
\end{align}
where $\ttr$ and $\dot{\ttr}$ denote the vector representations of the discretized functions.
Projecting the initial conditions \eqref{eq:initial} and final ones \eqref{eq:final} on these vector representations, gives us
\begin{align} \label{eq:boundary_conditions}
	\begin{array}{ll}
		\ttr(t_i) &= \ttr_i, \qquad
		\ttr(t_e) = \ttr_e, \\
		\dot{\ttr}(t_i) &= \ttv_i, \qquad
		\dot{\ttr}(t_e) = \ttv_e.
	\end{array}
\end{align}
In order to complete the state space representation \eqref{eq:ode} with regard to the above task, we use the corresponding output 
\begin{align} 
	\tty &= r_h(L,t) = \ttC\ttr.
\end{align}

\begin{rem}
	In order to solve an initial boundary value problem, an appropriate time-stepping scheme can be used to integrate the semi-discrete equation \eqref{eq:ode} in time.
	In the following, we will call this process a time-marching simulation.
	More information regarding this can be found in the Appendix.
\end{rem}

\section{Optimal trajectory feed-forward control design}
\label{sec:3}
In this section, we start with the cost functional to obtain an optimal feed-forward controller for the semi-discrete model, and cast it in terms of inner products on a space-time domain. We derive the corresponding variational identities, which, with  appropriate Galerkin approximation in time, can be immediately implemented as a space-time finite element problem in available software.

\begin{prob} \label{prob:1}
	Among all vectors $\ttr(t)$, $\ttu(t)$ that fulfill \eqref{eq:ode}
	and the boundary conditions \eqref{eq:boundary_conditions}
	find those that minimize the cost function
	\begin{align}
		\tilde{J} = \int_{t_i}^{t_e} \left( \frac{1}{2}\ttu^\top\ttu + \frac{\alpha}{2}(\ttC\ttr-\tty_d)^\top(\ttC\ttr-\tty_d)\right) \; \mathrm{d} t
	\end{align}
	with a factor $\alpha\in\mathbb{R}^{>0}$ penalizing the trajectory error compared to a quadratic input regularization term.
\end{prob}

\subsection{Lagrange functional}
We now consider a rectangular space-time domain $\Omega = \Omega_0\times T$, where ${T=[t_i,t_e]}$ is a sufficiently large time interval for possible phases of pre- and post-actuation, see Figure \ref{fig:space-time}.
To account for the system dynamics, we can define the Lagrange functional
\begin{align} \label{eq:cost_function}
	\begin{array}{ll}
		J &= \frac{1}{2}\langle u, u\rangle_{\Gamma_u} + \frac{\alpha}{2}\langle r_h-y_d, r_h-y_d\rangle_{\Gamma_y}  \\
		&\qquad - \langle \partial_sw_h, n \rangle_{\Omega} + \langle w_h, b -\rho A\partial^2_tr_h \rangle_{\Omega} + \langle w_h, u \rangle_{\Gamma_u}
	\end{array}
\end{align}
containing the Lagrange multiplier $w_h\in V_h(\Omega_0)\otimes H^1(T)$.

\begin{figure}[htbp]
	\vspace{0.3cm}
	\begin{center}
		\def\svgwidth{7.5cm}
		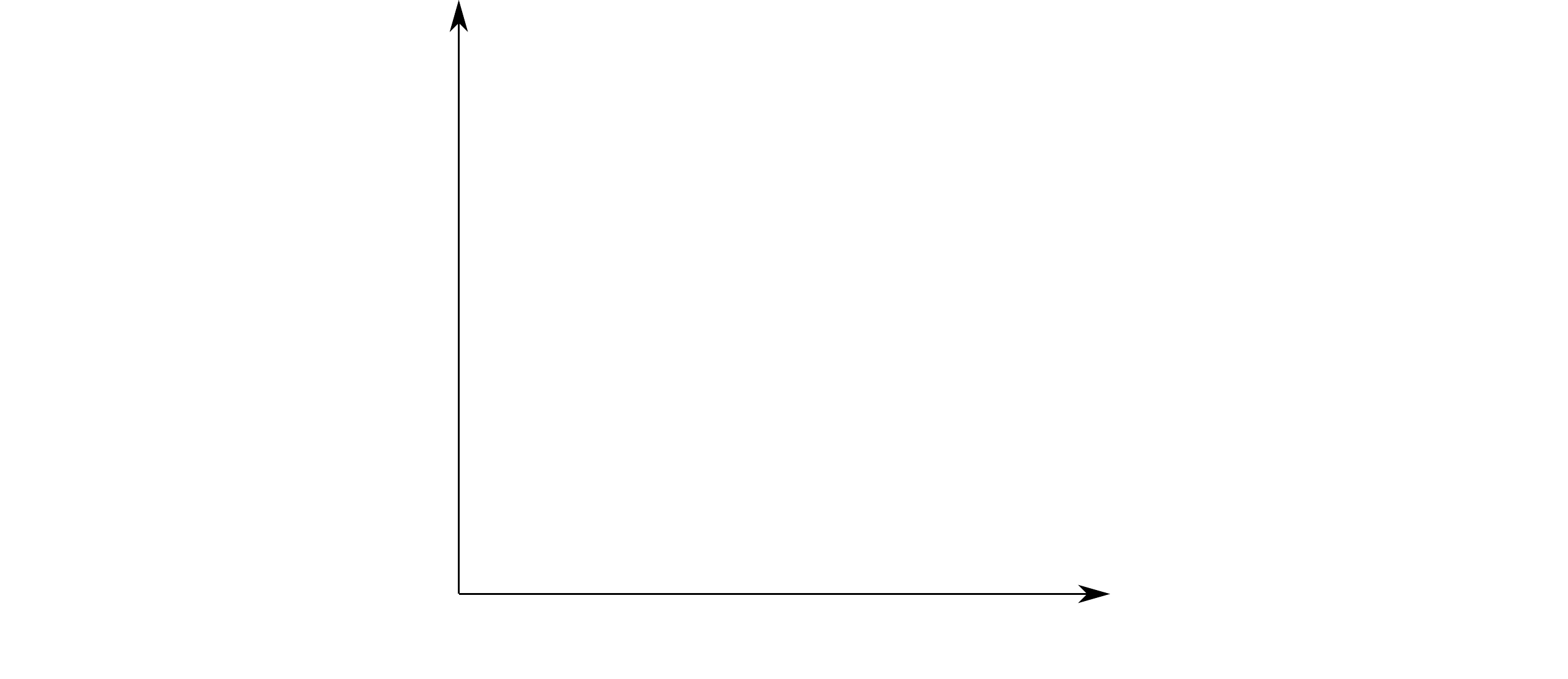
		\caption{Space-time domain.} 
		\label{fig:space-time}
	\end{center}
\end{figure}

\subsection{Variational identities}
As a next step, we derive the variational identities which will be approximated in the following section.
\begin{lem}
	Let $(r_h, u, w_h)$ be a sufficiently regular solution for Problem \ref{prob:1} minimizing \eqref{eq:cost_function}. Then the variational identities
	\begin{equation} \label{eq:var}
		\begin{array}{ll}
			\langle \delta u, u +w_h \rangle_{\Gamma_u}=0, \\[1ex]
			\langle \partial_t\delta w_h, \rho A\partial_tr_h \rangle_{\Omega} -\langle \partial_s\delta w_h, n \rangle_{\Omega} + \langle \delta w_h, b \rangle_{\Omega} \\
			\qquad + \langle \delta w_h, v_i \rangle_{\Gamma_i} - \langle \delta w_h, v_e \rangle_{\Gamma_e} + \langle \delta w_h, u \rangle_{\Gamma_u} = 0, \\[1ex]
			\langle \partial_t\delta r_h, \rho A \partial_t w_h \rangle_{\Omega} - \langle \partial_s\delta r_h, \nabla_{\partial_sr}n\cdot\partial_sw_h \rangle_{\Omega} \\
			\qquad + \alpha\langle \delta r_h, r_h - y_d \rangle_{\Gamma_y} = 0,
		\end{array}
	\end{equation}
	hold for all regular test functions $\delta r_h\in\mathcal{R}$, $\delta w_h\in\mathcal{W} $, and $\delta u\in\mathcal{U}$ with
	\begin{align}
		\mathcal{W} &= V_h(\Omega_0)\otimes H^1(T), \\
		\mathcal{R} &= \{\delta r_h\in \mathcal{W} \;|\; \delta r_h(s,t_i)=0, \; \delta r_h(s,t_e)=0\}, \\
		\mathcal{U} &= H^1(T).
	\end{align}
\end{lem}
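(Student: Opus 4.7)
The plan is to show that the three identities in \eqref{eq:var} are the first-order stationarity conditions of the Lagrange functional $J$ in \eqref{eq:cost_function} with respect to its three independent arguments $u$, $w_h$, and $r_h$. The only nontrivial manipulation is an integration by parts in time that removes the $\partial^2_t r_h$ term from $J$.

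As a preparatory step I would rewrite the kinetic constraint term
\begin{equation*}
-\langle w_h, \rho A\,\partial^2_t r_h\rangle_\Omega
= \langle \partial_t w_h, \rho A\,\partial_t r_h\rangle_\Omega
 -\bigl[\langle w_h, \rho A\,\partial_t r_h\rangle_{\Omega_0}\bigr]_{t_i}^{t_e},
\end{equation*}
and then substitute the velocity data $\partial_t r_h(\cdot,t_i)=v_i$, $\partial_t r_h(\cdot,t_e)=v_e$ from \eqref{eq:initial}--\eqref{eq:final}. The resulting equivalent Lagrangian depends on $r_h$ only through $\partial_s r_h$ and $\partial_t r_h$, and carries $v_i, v_e$ as fixed boundary data on $\Gamma_i, \Gamma_e$.

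I would then take the three partial variations and invoke the fundamental lemma of the calculus of variations on each. The $u$-variation of $\tfrac12\langle u,u\rangle_{\Gamma_u}+\langle w_h,u\rangle_{\Gamma_u}$ is algebraic and yields the first identity. The $w_h$-variation acts term by term on the rewritten Lagrangian and produces the second identity, with $v_i, v_e$ emerging from the boundary contributions introduced above. For the $r_h$-variation, admissible directions $\delta r_h\in\mathcal{R}$ vanish at $t_i, t_e$ and thereby preserve the position data. The tracking term contributes $\alpha\langle\delta r_h, r_h-y_d\rangle_{\Gamma_y}$; the stress term contributes $-\langle \partial_s w_h,\nabla_{\partial_s r} n\cdot\partial_s\delta r_h\rangle_\Omega$ via the chain rule $\delta n=\nabla_{\partial_s r} n\cdot\partial_s\delta r_h$, which by symmetry of the hyperelastic tangent coming from \eqref{eq:constitutive} can be rewritten as $-\langle \partial_s\delta r_h,\nabla_{\partial_s r} n\cdot\partial_s w_h\rangle_\Omega$; and the kinetic coupling contributes $\langle \partial_t\delta r_h,\rho A\,\partial_t w_h\rangle_\Omega$. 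Collecting these yields the third identity.

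The point that requires care is precisely the handling of the velocity boundary conditions. Since $\mathcal{R}$ imposes $\delta r_h(\cdot,t_i)=\delta r_h(\cdot,t_e)=0$ but leaves $\partial_t\delta r_h$ at those times free, a direct $r_h$-variation on the original $J$ would generate the stray boundary term $-\bigl[\langle w_h,\rho A\,\partial_t\delta r_h\rangle_{\Omega_0}\bigr]_{t_i}^{t_e}$, which cannot be eliminated within $\mathcal{R}$. The preparatory integration by parts sidesteps this by shifting the time derivative onto $w_h$ before the variation is taken; the velocity data $v_i, v_e$ thereby migrate into the second identity, where, because $w_h\in\mathcal{W}$ is unconstrained at $t_i, t_e$, they appear as natural boundary data weakly enforcing $\partial_t r_h(\cdot,t_i)=v_i$ and $\partial_t r_h(\cdot,t_e)=v_e$.
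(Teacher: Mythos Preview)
Your proposal is correct and follows the same overall strategy as the paper---compute $\delta J$ and integrate by parts in time---but with the two operations performed in the opposite order. The paper first varies $J$ as written, obtaining the terms $(*)=\langle \delta w_h,\rho A\,\partial_t^2 r_h\rangle_\Omega$ and $(**)=\langle \partial_t^2\delta r_h,\rho A\,w_h\rangle_\Omega$, and only then integrates each by parts once in $t$; for $(*)$ the velocity data $v_i,v_e$ are substituted into the resulting boundary terms to give the second identity, while for $(**)$ the boundary contributions $\pm\langle \partial_t\delta r_h,\rho A\,w_h\rangle_{\Gamma_{i,e}}$ are dismissed on the grounds that ``$\delta r_h$ cannot be varied at the boundaries $\Gamma_i,\Gamma_e$''. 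Your preparatory integration by parts, with $v_i,v_e$ inserted as fixed data \emph{before} varying, is the tidier route to the same destination: the $r_h$-variation then generates no temporal boundary term at all, and you sidestep the point you rightly flag---that $\mathcal{R}$ constrains only $\delta r_h$, not $\partial_t\delta r_h$, at $t_i,t_e$. The two derivations are equivalent in spirit; yours is simply more explicit about how the velocity conditions enter weakly through the $\delta w_h$-identity.
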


\begin{proof}
	By considering the variational derivative of \eqref{eq:cost_function} and setting it to zero, one can see that
	\begin{equation}
		\begin{array}{ll}
			\delta J &=\langle \delta u, u +w_h \rangle_{\Gamma_u} \\
			&\qquad - \langle \partial_s\delta w_h, n \rangle_{\Omega} + \langle \delta w_h, b\rangle_{\Omega} -\underbrace{\langle \delta w_h,\rho A\partial^2_tr_h \rangle_{\Omega}}_{(*)}  \\
			&\qquad - \langle \partial_s\delta r_h, \nabla_{\partial_sr}n\cdot\partial_sw_h \rangle_{\Omega} - \underbrace{\langle \partial^2_t\delta r_h, \rho A w_h \rangle_{\Omega}}_{(**)} \\
			&\qquad + \alpha\langle \delta r_h, r_h - y_d \rangle_{\Gamma_y} + \langle \delta w_h, u \rangle_{\Gamma_u} = 0.
		\end{array}
	\end{equation}
	Integration by parts of $(*)$ leads to
	\begin{equation} \label{eq:ibp_1}
		\begin{array}{ll}
			(*) &= -\langle \partial_t\delta w_h, \rho A\partial_tr_h \rangle_{\Omega} \\ &\qquad - \langle \delta w_h, \rho Av_i \rangle_{\Gamma_i} + \langle \delta w_h, \rho Av_e \rangle_{\Gamma_e},
		\end{array}
	\end{equation}
	where the corresponding velocity boundary conditions are already substituted. Integration by parts of $(**)$ gives us
	\begin{equation} \label{eq:ibp_2}
		\begin{array}{ll}
			(**) &= -\langle \partial_t\delta r_h, \rho A\partial_tw_h \rangle_{\Omega} \\
			&\qquad - \langle \partial_t\delta r_h, \rho A w_h \rangle_{\Gamma_i} + \langle \partial_t\delta r_h, \rho w_h \rangle_{\Gamma_e}.
		\end{array}
	\end{equation}
	Due the fixed boundary conditions \eqref{eq:boundary_conditions} $\delta r_h$ can not be varied at the boundaries $\Gamma_i$, $\Gamma_e$, which makes the boundary terms in \eqref{eq:ibp_2} vanish.
\end{proof}

\subsection{Time discretization}
For the discretization in time, we likewise use a Galerkin approximation of the variational identities \eqref{eq:var}. As indicated in the introduction, we use standard finite element spaces for the approximation.

\begin{prop}
	By defining
	\begin{equation}
		\mathcal{W}_t \subset\mathcal{W},\; \mathcal{R}_t \subset\mathcal{R},\;\text{and}\; \mathcal{U}_t \subset\mathcal{U} \notag
	\end{equation}
	as appropriate finite dimensional subspaces, we can search for an approximation $(u_{t},w_{h_t}) \in \mathcal{U}_t \times \mathcal{W}_t$, and 
	\begin{equation}
		r_{h_t}\in\{r_{h_t}\in \mathcal{W}_t \;|\; r_{h_t}(s,t_i)= r_i,\; r_{h_t}(s,t_e)= r_e\}\notag
	\end{equation}
	for the solution of our problem satisfying the discretized identities
	\begin{equation} \label{eq:var_discrete}
		\begin{array}{ll}
			\langle \delta u_t, u_t +w_{h_t} \rangle_{\Gamma_u}=0, \\[1ex]
			\langle \partial_t\delta w_{h_t}, \rho A\partial_tr_{h_t} \rangle_{\Omega} -\langle \partial_s\delta w_{h_t}, n \rangle_{\Omega} + \langle \delta w_{h_t}, b \rangle_{\Omega} \\
			\qquad + \langle \delta w_{h_t}, v_i \rangle_{\Gamma_i} - \langle \delta w_{h_t}, v_e \rangle_{\Gamma_e} + \langle \delta w_{h_t}, u_t \rangle_{\Gamma_u} = 0, \\[1ex]
			\langle \partial_t\delta r_{h_t}, \rho A \partial_t w_{h_t} \rangle_{\Omega} - \langle \partial_s\delta r_{h_t}, \nabla_{\partial_sr}n\cdot\partial_sw_{h_t} \rangle_{\Omega} \\
			\qquad + \alpha\langle \delta r_{h_t}, r_{h_t} - y_d \rangle_{\Gamma_y} = 0,
		\end{array}
	\end{equation}
	for all $(\delta u_t, \delta r_{h_t}, \delta w_{h_t}) \in \mathcal{U}_t \times \mathcal{R}_t \times \mathcal{W}_t$.
\end{prop}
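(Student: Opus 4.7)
The plan is to identify the statement as a standard Galerkin projection of the variational identities \eqref{eq:var} from the preceding Lemma onto finite-dimensional subspaces, and then verify that the restriction produces a consistent, well-posed square algebraic system.

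First I would fix bases for $\mathcal{U}_t$, $\mathcal{R}_t$, and $\mathcal{W}_t$ — a natural choice being tensor products of the spatial finite element shape functions already introduced in Section \ref{sec:2} with Lagrange polynomials on a temporal mesh of $T = [t_i, t_e]$. Because $\mathcal{U}_t \times \mathcal{R}_t \times \mathcal{W}_t \subset \mathcal{U} \times \mathcal{R} \times \mathcal{W}$ by assumption, any triple chosen from the discrete spaces is admissible in \eqref{eq:var}, so \eqref{eq:var_discrete} is nothing but \eqref{eq:var} read on a smaller class of test functions. This observation is the content of the Proposition, up to checking that the trial spaces are compatible with the prescribed essential data.

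Next I would expand $(u_t, r_{h_t}, w_{h_t})$ in the chosen bases. The essential boundary conditions $r_{h_t}(s, t_i) = r_i$ and $r_{h_t}(s, t_e) = r_e$ fix the coefficients of those basis functions whose temporal nodes coincide with $t_i$ or $t_e$; the remaining coefficients of $r_{h_t}$, together with all coefficients of $u_t$ and $w_{h_t}$, form the unknowns. Since $\mathcal{R}_t$ inherits the vanishing trace at $\Gamma_i \cup \Gamma_e$ from $\mathcal{R}$, the integration by parts \eqref{eq:ibp_2} still eliminates the corresponding boundary contributions in the discrete setting, and the velocity data $v_i, v_e$ continue to enter through \eqref{eq:ibp_1} exactly as in the proof of the Lemma. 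Testing \eqref{eq:var_discrete} against each basis element of the three test spaces then collapses the identities into a finite set of nonlinear algebraic equations whose cardinality matches that of the unknowns, producing the square system that is handed to a finite element solver.

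The main obstacle — which the Proposition wisely does not attempt to resolve — is existence and uniqueness of a discrete solution. Because $n$ depends nonlinearly on $\partial_s r$ via Assumption 1, the resulting algebraic system is nonlinear, and a rigorous existence result would require either a fixed-point/contraction argument under smallness conditions on $y_d$ and the spatial and temporal mesh size, or a homotopy argument continuing from a linearized reference solution. As the Proposition only asserts that one \emph{searches for} such a triple, I would stop at establishing the projection as well-defined and consistent with the Lemma, and defer the existence question to a separate analysis supported by the numerical experiment of Section \ref{sec:4}.
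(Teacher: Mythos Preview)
Your reading is correct: the Proposition is not a theorem to be proved but a definition of the discrete problem via Galerkin restriction of \eqref{eq:var} to finite-dimensional subspaces, and you have identified this accurately. The paper in fact offers no proof at all for this Proposition --- it is stated and immediately followed by a Remark on eliminating the first line of \eqref{eq:var_discrete} --- so your discussion of consistency, the role of the essential boundary data on $\Gamma_i\cup\Gamma_e$, and the deliberate omission of an existence argument already goes beyond what the paper supplies and is entirely appropriate.
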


\begin{rem}
	Testing \eqref{eq:var_discrete} with ${\delta u_t=\delta w_{h_t}|_{\Gamma_u}\in\mathcal{W}_t|_{\Gamma_u}}$ allows us to eliminate the first line of \eqref{eq:var_discrete} and use ${\langle \delta w_{h_t}, u_t \rangle_{\Gamma_u}} = -{\langle \delta w_{h_t}, w_{h_t} \rangle_{\Gamma_u}}$ in the second line of \eqref{eq:var_discrete}.
	This reduces the amount of equations for the discretization process.
\end{rem}

Since we use bilinear basis functions of $s$ and $t$, our approximated computational domain represents a structured mesh with rectangular shaped space-time finite elements. The mesh is spanned by $n_s$ cells in $s$-direction and $n_t$ cells in $t$-direction, leading to ${n_s\times n_t}$ space-time finite elements.

\subsection{Discussion}
Our result from the space-time finite element perspective can be related to works based on semi-discrete models and the time integration with special time finite elements.
Since we assume that the space of $w_h$ consists of tensor products of bilinear functions
of $s$ and $t$, e.g. $w_h(s,t) = \phi(s)\ttw(t)$, the above Lagrange functional can be rewritten as
\begin{equation}
	\begin{array}{ll}
		J &= \frac{1}{2}\langle \ttu, \ttu\rangle_{T} + \frac{\alpha}{2}\langle \ttC\ttr-\tty, \ttC\ttr-\tty\rangle_{T} \\
		&\qquad + \langle \ttw, -\ttM\ddot{\ttr} -\ttk(\ttr) + \ttB + \ttG\ttu \rangle_{T},
	\end{array}
\end{equation}
which can also be found in the work of Altmann and Heiland \cite{Altmann2016}. Using the Euler-Lagrange equation, we can derive the optimality conditions
\begin{equation}
	\label{eq:optimality_conditions}
	\begin{array}{lll}
		\ttu &= - \ttG^\top\ttw, \\
		\ttM\ddot{\ttr} &= -\ttk(\ttr) + \ttB + \ttG\ttu, \\
		\ttM\ddot{\ttw} &= -\nabla_\ttr\ttk(\ttr)\ttw + (\ttC\ttr-\tty_d)\alpha.
	\end{array}
\end{equation}
In case of 
\begin{equation*}
	\mathcal{W}_t=[V_h(\Omega_0)\otimes P_1(T)]\cap H^1(\Omega)
\end{equation*}
the approximation of the variational identities corresponds to a multiple-shooting method \cite{Rao2010}, where the optimality conditions \eqref{eq:optimality_conditions} are integrated via a special (unbiased displacement-continuous, velocity-discontinuous) time finite element scheme \cite{Bauchau2024}.

\section{Numerical test}
\label{sec:4}
To demonstrate the performance of the above control design approach, we now provide some numerical results.

\subsection{Model scenario}
We consider a planar motion of a geometrically exact string under gravity. The suspended rope performs a finite-time transition between two stationary set-points, with the end of the string following the desired trajectory
\begin{equation}
	y_d(t)=\begin{bmatrix} 
		1 \\ 1
	\end{bmatrix}\psi(t)
\end{equation}
with
\begin{equation}
	\psi(t)=
	\begin{cases}
		0,\quad &t\leq \Delta t, \\
		3\left(\frac{t-\Delta t}{\Delta t}\right)^2 - 2\left(\frac{t-\Delta t}{\Delta t}\right)^3,\quad &\Delta t<t\leq 2\Delta t, \\
		1,\quad &t>2\Delta t. \\
	\end{cases}
\end{equation}

\begin{rem}
	Since \eqref{eq:balance} has a hyperbolic character the traveling time of information depends on the wave propagation speed. Therefore, ${y(t)}$ requires a pre- and post actuation  phase ${\Delta t}$ larger than the traveling time of a wave between ${s=0}$ and ${s=L}$ for steering the string from one steady-state to another one. This is especially relevant for flatness based control \cite{Knueppel2010}. A pre- and post-actuation phase in ${y_d(t)}$ allows us to approximate the inverse solution similar to \cite{Bastos2017}. Consequently, ${y(t)}$ follows ${y_d(t)}$ almost exactly, see also Figure \ref{fig:y}.
\end{rem}

The initial set point at time $t_i=0$ results from solving the equilibrium problem of \eqref{eq:ode} where ${V_h=P_1(\Omega_0)\cap H^1(\Omega_0)}$ and ${r(0,t)\equiv 0}$. Therefore, $n_s=10$ equidistant finite elements were used. The final set-point at time $t_e=6$ corresponds to the initial one except that the positional arguments are shifted by one in both directions. The remaining parameters of the numerical experiment are given in Table \ref{tab:parameter}.

\begin{table}[htbp]
	\caption{Parameters}
	\label{tab:parameter}
	\centering \medskip
	{\small
		\begin{tabular}{ l l l l l l}
			\hline
			$L$ & 
			$\rho$ & 
			$EA$ & 
			$g$ &
			$\Delta t$ &
			$\alpha$ \\
			\hline
			$1$ & $1$ & $1$ & $9.81$ & 2& 100 \\
			\hline
		\end{tabular}
	}
\end{table}

\subsection{Details on the computational approximation process}
Since we use a finite element method for the space-time discretization, the computational domain is partitioned into ${10\times 100}$ uniform rectangles. 
The trial functions of \eqref{eq:var_discrete} are generated via continuous Lagrangian shape functions of first order what leads to the approximation spaces
\begin{equation}
	\mathcal{W}_t=[P_1(\Omega_0)\otimes P_1(T)]\cap H^1(\Omega), \qquad \mathcal{U}_t=\mathcal{W}_t|_{\Gamma_u}. \notag
\end{equation}
The following computational results are generated with the finite element framework FEniCSx \cite{Baratta2023,Scroggs2022a,Scroggs2022b,Alnaes2014}.

\subsection{Numerical results}
Solving \eqref{eq:var_discrete} according to the above details, we generate several illustrations.
E.g., Figure \ref{fig:r_x} shows the position field $r(s,t)$ in $X_1$-direction and the finite element mesh. The components of the corresponding input force vector $u(t)$ are given in Figure \ref{fig:u}.
To verify our results we apply the generated input $u(t)$ to a time-marching simulation. It uses an implicit midpoint rule for the time integration and the time step size $\tau=0.06$. Further information on the time-marching simulation is given in the Appendix. In Figure \ref{fig:y}, the resulting output $y(t)$ in $X_1$-direction is presented\footnote{Figure \ref{fig:u} and \ref{fig:y} also contain the solutions for lower values of $\alpha$ to illustrate the increased deviation from the desired trajectory if it is less penalized in the quadratic cost functional.}.
In order get an impression of the overall motion, Figure \ref{fig:snapshots} shows several snapshots of the string configuration which also result from the time-marching simulation.

\begin{figure}[htbp]
	\centering
	\includegraphics[width=0.5\textwidth]{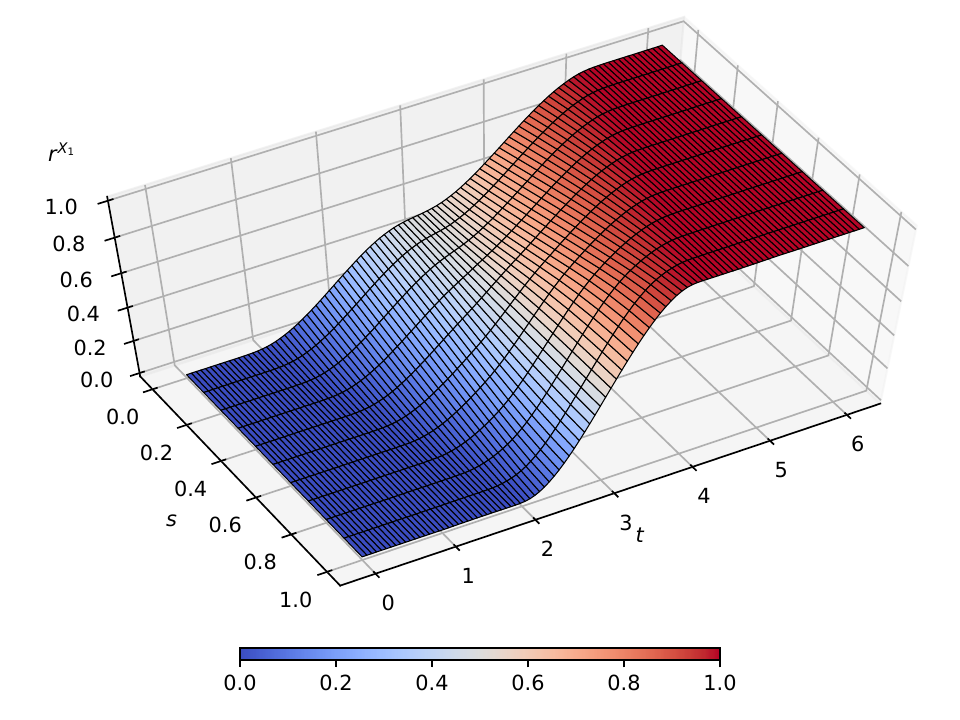}
	\caption{Numerical solution of $r(s,t)$ in $X_1$-direction.}
	\label{fig:r_x}
\end{figure}

\begin{figure}[htbp]
	\centering
	\includegraphics[width=0.5\textwidth]{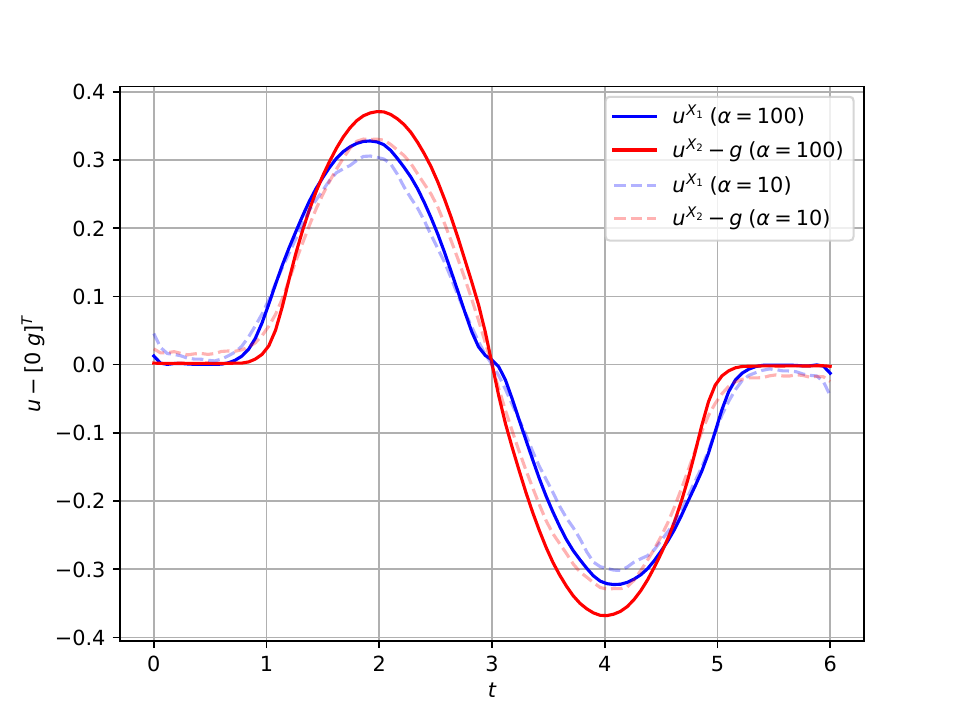}
	\caption{Numerical solution of $u(t)$. In order to present the input force in $X_1$-direction and the one in $X_2$-direction in the same diagram, we subtract the gravitational portion.}
	\label{fig:u}
\end{figure}

\begin{figure}[htbp]
	\centering
	\includegraphics[width=0.5\textwidth]{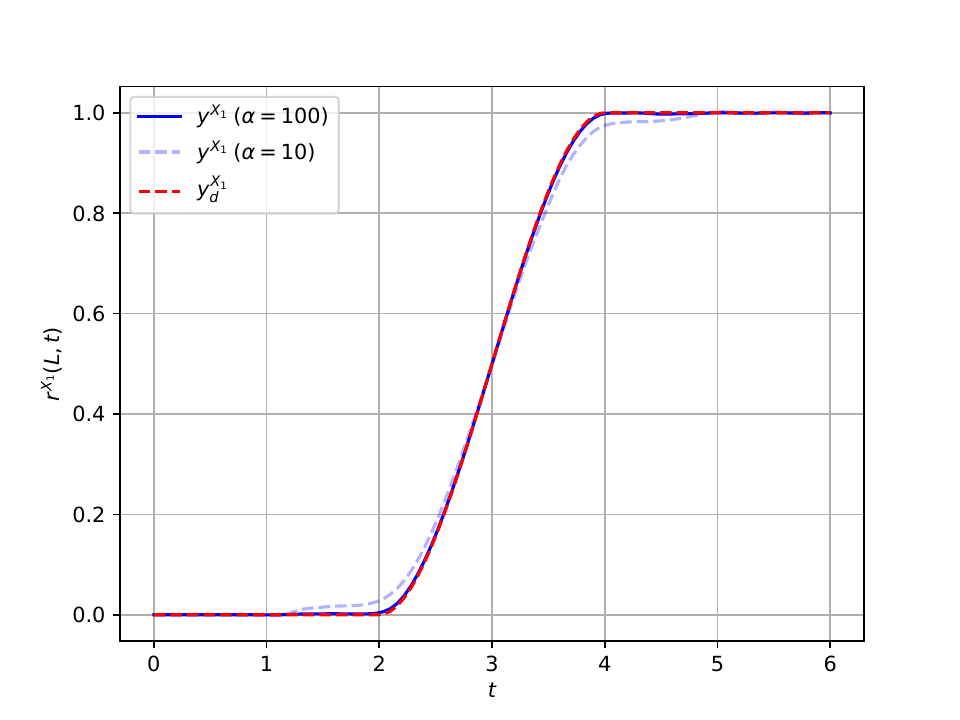}
	\caption{Actual output $y(t)$ compared with $y_d(t)$. In the case of a large $\alpha$, $y(t)$ is close to $y_d(t)$ . The results in $X_2$-direction are similar and omitted for the sake of clarity.}
	\label{fig:y}
\end{figure}

\begin{figure}[htbp]
	\centering
	\includegraphics[width=0.5\textwidth]{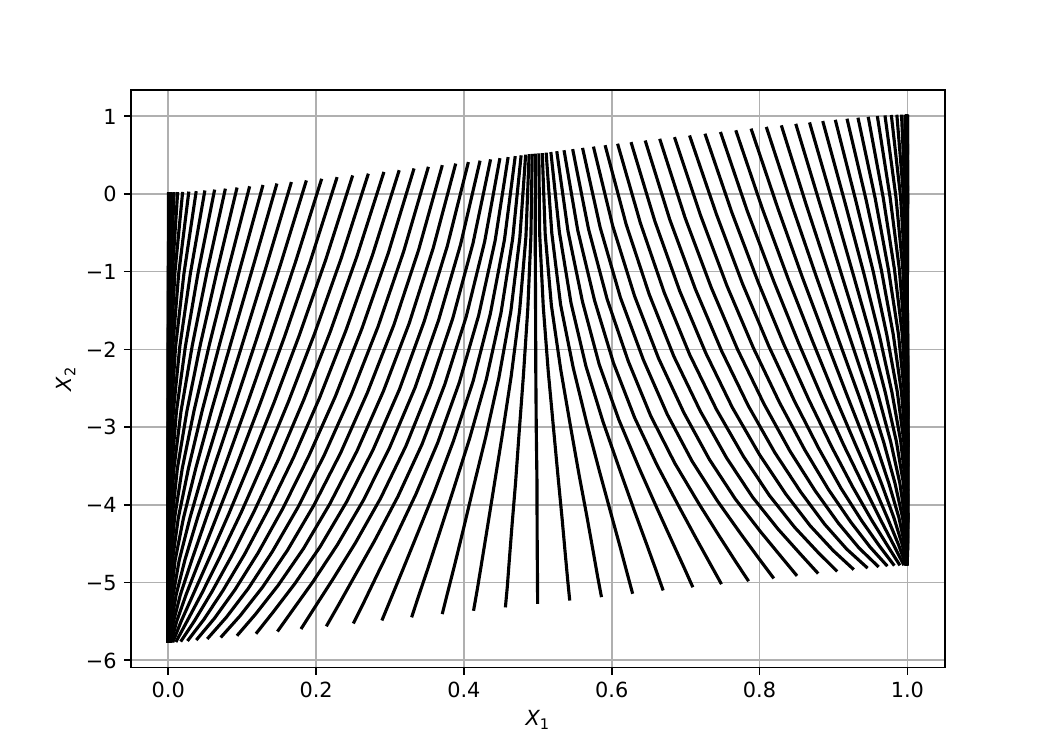}
	\caption{Snapshots at different time steps starting from the left with the initial string configuration at time ${t=t_i}$ and ending in the final one (${t=t_e}$) on the right.}
	\label{fig:snapshots}
\end{figure}

\section{Conclusion}
\label{sec:5}
We presented how to generate an optimal trajectory feed-forward controller for geometrically exact strings exploiting a space-time variational formulation of the problem. The big advantage of this approach, which is related to corresponding formulations for semi-discretized models and the solution of their two-point boundary value problems, is the ease of implementation in commercial or free finite element software, in conjunction with the efficient solution of the resulting algebraic system of equations.

In future research, we might extend the control design to three-dimensional systems. This goes hand in hand with a more complex implementation if the favored finite element toolbox is not compatible with four-dimensional meshes.
Moreover, the challenges or advantages of unstructured space-time finite element meshes or other Galerkin approximation methods have not been discussed yet.
Additionally, we want to extend the feed-forward control design to other cost functionals.
Consequently, and depending on its real-time capability, the presented control design might also be relevant for model predictive control.
                                  
\section*{APPENDIX}
\label{app:a}
In this contribution, the numerical time integration of \eqref{eq:ode} is called a time-marching simulation; e.g., in Section \ref{sec:4} we applied an implicit midpoint rule.
Therefore, we introduce the velocity $\ttv=\dot{\ttr}$ to rewrite \eqref{eq:ode} as a first order system ${\dot{\ttx}=\ttf(\ttx,\ttu)}$ with ${\ttx^\top=[\ttr^\top\;\ttv^\top]}$;
and define
\begin{equation}
	d_\tau\ttx=\tau^{-1}(\ttx^{n+1}-\ttx^n),\quad\text{and}\quad \ttx^{n+\frac{1}{2}}=2^{-1}(\ttx^{n+1}+\ttx^n),\notag
\end{equation}
where $\tau$ is a fixed time step size and $t^n=n\tau$, ${n\geq 0}$ the discrete time step. We now use
\begin{equation}
	d_\tau\ttx=\ttf(\ttx^{n+\frac{1}{2}}, \ttu^{n+\frac{1}{2}}) \notag
\end{equation}
to obtain $\ttx^{n+1}\approx\ttx(t^{n+1})$ at the next time step $t^{n+1}$.

\section*{ACKNOWLEDGMENT}
The authors would like to thank Tengman Wang, Levent \"Ogretmen, and Philipp L. Kinon for insightful discussions.
The support by the DFH-UFA French-German doctoral college "Port-Hamiltonian Systems: Modeling, Numerics and Control" is gratefully acknowledged.

\bibliographystyle{IEEEtran}
\bibliography{ifacconf}

\begin{thebibliography}{10}
\providecommand{\url}[1]{#1}
\csname url@rmstyle\endcsname
\providecommand{\newblock}{\relax}
\providecommand{\bibinfo}[2]{#2}
\providecommand\BIBentrySTDinterwordspacing{\spaceskip=0pt\relax}
\providecommand\BIBentryALTinterwordstretchfactor{4}
\providecommand\BIBentryALTinterwordspacing{\spaceskip=\fontdimen2\font plus
\BIBentryALTinterwordstretchfactor\fontdimen3\font minus
  \fontdimen4\font\relax}
\providecommand\BIBforeignlanguage[2]{{%
\expandafter\ifx\csname l@#1\endcsname\relax
\typeout{** WARNING: IEEEtran.bst: No hyphenation pattern has been}%
\typeout{** loaded for the language `#1'. Using the pattern for}%
\typeout{** the default language instead.}%
\else
\language=\csname l@#1\endcsname
\fi
#2}}

\bibitem{Wang2021}
M.~Wang and P.~Kotyczka, ``Trajectory control of an elastic beam based on
  port-{H}amiltonian numerical models,'' \emph{at - Automatisierungstechnik},
  vol.~69, no.~6, pp. 457--471, 2021.

\bibitem{Morlock2022}
M.~Morlock, M.~Burkhardt, R.~Seifried, and P.~Eberhard, ``End-effector
  trajectory tracking of flexible link parallel robots using servo
  constraints,'' \emph{Multibody System Dynamics}, vol.~56, no.~1, pp. 1--28,
  2022.

\bibitem{FLIESS1995}
M.~Fliess, J.~Lévine, P.~Martin, and P.~Rouchon, ``Flatness and defect of
  non-linear systems: {I}ntroductory theory and examples,'' \emph{International
  Journal of Control}, vol.~61, no.~6, pp. 1327--1361, 1995.

\bibitem{Rothfuss1997}
R.~Rothfuß, J.~Rudolph, and M.~Zeitz, ``Flachheit: {E}in neuer {Z}ugang zur
  {S}teuerung und {R}egelung nichtlinearer {S}ysteme,'' \emph{auto}, vol.~45,
  no.~11, pp. 517--525, 1997.

\bibitem{Hagenmeyer2004}
V.~Hagenmeyer and M.~Zeitz, ``Flachheitsbasierter {E}ntwurf von linearen und
  nichtlinearen {V}orsteuerungen ({F}latness-based design of linear and
  nonlinear feedforward controls),'' \emph{at - Automatisierungstechnik},
  vol.~52, no.~1, pp. 3--12, 2004.

\bibitem{Petit2001}
N.~Petit and P.~Rouchon, ``Flatness of heavy chain systems,'' \emph{SIAM
  Journal on Control and Optimization}, vol.~40, no.~2, pp. 475--495, 2001.

\bibitem{Knueppel2010}
T.~Knüppel and F.~Woittennek, ``Flatness based control design for a nonlinear
  heavy chain model,'' \emph{IFAC Proceedings Volumes}, vol.~43, no.~14, pp.
  701--706, 2010.

\bibitem{Khalil2014}
H.~Khalil, \emph{Nonlinear Control}, ser. Always Learning.\hskip 1em plus 0.5em
  minus 0.4em\relax Pearson, 2014.

\bibitem{DEGANG1996}
D.~Chen and B.~Paden, ``Stable inversion of nonlinear non-minimum phase
  systems,'' \emph{International Journal of Control}, vol.~64, no.~1, pp.
  81--97, 1996.

\bibitem{Taylor2002}
D.~Taylor and S.~Li, ``Stable inversion of continuous-time nonlinear systems by
  finite-difference methods,'' \emph{IEEE Transactions on Automatic Control},
  vol.~47, no.~3, pp. 537--542, 2002.

\bibitem{Graichen2005}
K.~Graichen, V.~Hagenmeyer, and M.~Zeitz, ``A new approach to inversion-based
  feedforward control design for nonlinear systems,'' \emph{Automatica},
  vol.~41, no.~12, pp. 2033--2041, 2005.

\bibitem{Rao2010}
A.~Rao, ``A survey of numerical methods for optimal control,'' \emph{Advances
  in the Astronautical Sciences}, vol. 135, 2010.

\bibitem{Blajer2004}
W.~Blajer and K.~Kołodziejczyk, ``A geometric approach to solving problems of
  control constraints: Theory and a dae framework,'' \emph{Multibody System
  Dynamics}, vol.~11, no.~4, pp. 343--364, 2004.

\bibitem{Druecker2023a}
S.~Drücker and R.~Seifried, ``Trajectory-tracking control from a multibody
  system dynamics perspective,'' \emph{Multibody System Dynamics}, vol.~58, no.
  3–4, pp. 341--363, 2023.

\bibitem{Otto2017}
S.~Otto and R.~Seifried, ``Real-time trajectory control of an overhead crane
  using servo-constraints,'' \emph{Multibody System Dynamics}, vol.~42, no.~1,
  pp. 1--17, 2017.

\bibitem{Otto2018}
S.~Otto and R.~Seifired, \emph{Open-Loop Control of Underactuated Mechanical
  Systems Using Servo-Constraints: Analysis and Some Examples}.\hskip 1em plus
  0.5em minus 0.4em\relax Springer International Publishing, 2018, pp. 81--122.

\bibitem{Stroehle2022}
T.~Ströhle and P.~Betsch, ``A simultaneous space‐time discretization
  approach to the inverse dynamics of geometrically exact strings,''
  \emph{International Journal for Numerical Methods in Engineering}, vol. 123,
  no.~11, pp. 2573--2609, 2022.

\bibitem{Bruels2013}
O.~Brüls, G.~J. Bastos, and R.~Seifried, ``A stable inversion method for
  feedforward control of constrained flexible multibody systems,''
  \emph{Journal of Computational and Nonlinear Dynamics}, vol.~9, no.~1, 2013.

\bibitem{Druecker2023b}
S.~Drücker and R.~Seifried, ``Application of stable inversion to flexible
  manipulators modeled by the absolute nodal coordinate formulation,''
  \emph{GAMM-Mitteilungen}, vol.~46, no.~1, 2023.

\bibitem{Bastos2013}
G.~Bastos, R.~Seifried, and O.~Brüls, ``Inverse dynamics of serial and
  parallel underactuated multibody systems using a dae optimal control
  approach,'' \emph{Multibody System Dynamics}, vol.~30, no.~3, pp. 359--376,
  2013.

\bibitem{Bastos2017}
G.~Bastos, R.~Seifried, and O.~Br\"uls, ``Analysis of stable model inversion
  methods for constrained underactuated mechanical systems,'' \emph{Mechanism
  and Machine Theory}, vol. 111, pp. 99--117, 2017.

\bibitem{Altmann2016}
R.~Altmann and J.~Heiland, ``Simulation of multibody systems with servo
  constraints through optimal control,'' \emph{Multibody System Dynamics},
  vol.~40, no.~1, pp. 75--98, 2016.

\bibitem{Murray1996}
R.~M. Murray, ``Trajectory generation for a towed cable system using
  differential flatness,'' \emph{IFAC Proceedings Volumes}, vol.~29, no.~1, pp.
  2792--2797, 1996.

\bibitem{Antman1995}
S.~S. Antman, \emph{Nonlinear Problems of Elasticity}.\hskip 1em plus 0.5em
  minus 0.4em\relax Springer New York, 1995.

\bibitem{Bonet2008}
J.~Bonet and R.~D. Wood, \emph{Nonlinear Continuum Mechanics for Finite Element
  Analysis}.\hskip 1em plus 0.5em minus 0.4em\relax Cambridge University Press,
  2008.

\bibitem{Bauchau2024}
O.~A. Bauchau, ``The finite element method in time for multibody dynamics,''
  \emph{Journal of Computational and Nonlinear Dynamics}, vol.~19, no.~7, 2024.

\bibitem{Baratta2023}
I.~A. Baratta, J.~P. Dean, J.~S. Dokken, M.~Habera, J.~S. Hale, C.~N.
  Richardson, M.~E. Rognes, M.~W. Scroggs, N.~Sime, and G.~N. Wells,
  ``\BIBforeignlanguage{en}{Dolfinx: The next generation fenics problem solving
  environment},'' 2023.

\bibitem{Scroggs2022a}
M.~W. Scroggs, I.~A. Baratta, C.~N. Richardson, and G.~N. Wells, ``Basix: {A}
  runtime finite element basis evaluation library,'' \emph{Journal of Open
  Source Software}, vol.~7, no.~73, p. 3982, 2022.

\bibitem{Scroggs2022b}
M.~W. Scroggs, J.~S. Dokken, C.~N. Richardson, and G.~N. Wells, ``Construction
  of arbitrary order finite element degree-of-freedom maps on polygonal and
  polyhedral cell meshes,'' \emph{ACM Transactions on Mathematical Software},
  vol.~48, no.~2, pp. 1--23, 2022.

\bibitem{Alnaes2014}
M.~S. Alnæs, A.~Logg, K.~B. Ølgaard, M.~E. Rognes, and G.~N. Wells, ``Unified
  form language: A domain-specific language for weak formulations of partial
  differential equations,'' \emph{ACM Transactions on Mathematical Software},
  vol.~40, no.~2, pp. 1--37, 2014.

\end{thebibliography}

\end{document}